\newcommand{\PAut}{\rm PAut}
\newcommand{\N}{{\mathbb N}}
\newcommand{\A}{\mathbb A}
\newcommand{\Z}{{\mathbb Z}}
\newcommand{\F}{\mathbb F}
\newcommand{\D}{{\mathcal D}}
\newcommand{\C}{{\mathcal C}}
\newcommand{\I}{{\mathcal I}}
\newcommand{\U}{{\mathcal U}}
\newcommand{\Db}{\overline{\D^*}}
\newcommand{\Or}{\mathcal{O}}
\newcommand{\doble}[2]{\genfrac{}{}{0cm}{2}{#1}{#2}}
\newtheorem{theorem}{Theorem}
\newtheorem{lemma}[theorem]{Lemma}
\newtheorem{proposition}[theorem]{Proposition}
\newtheorem{definition}[theorem]{Definition}
\newtheorem{corollary}[theorem]{Corollary}
\newtheorem{remark}[theorem]{Remark}
\newtheorem{remarks}[theorem]{Remarks}
\newtheorem{example}[theorem]{Example}
\begin{document}
\title{Generalized Reed-Muller codes: A new construction of information sets}

\author{%
  \IEEEauthorblockN{José Joaquín Bernal}
  \IEEEauthorblockA{Departmento de Matemáticas\\
                    Universidad de Murcia\\
                    30100 Murcia, SPAIN\\
                    Email: josejoaquin.bernal@um.es}
}

\maketitle

\begin{abstract}
   In \cite{BS} we show how to construct information sets for Reed-Muller codes only in terms of their basic parameters. In this work we deal with the corresponding problem for $q$-ary Generalized Reed-Muller codes of first and second order. We see that for first-order codes the result for binary Reed-Muller codes is also valid, while for second-order codes, with $q> 2$, we have to manage more complex defining sets and we show that we get different information sets. We also present some examples and associated open problems.
\end{abstract}

\section{Introduction}
In 1954, D. E. Muller introduced the family of (binary) Reed-Muller codes \cite{Muller}. In the same year, I. S. Reed presented a specific decoding algorithm for them \cite{Reed}. Later, Kasami, Lin and Peterson generalized this family of codes to other ground fields \cite{KLP}. Independently. E. J. Weldon presented the nonprimitive generalization and the single-variable approach using the Mattson-Solomon polynomial \cite{Weldon}. The reader may find an extensive explanation on both generalizations in \cite{DGM, handbook3} and  \cite{ handbook2}. In 1954

In this paper we consider primitive GRM codes seen as affine-invariant codes in the algebra $\F G$, where $\F$ is the field with $q$ elements and $G$ is the abelian group of the field with $q^m$ elements, for some natural number $m$. Then, as any affine-invariant code, GRM codes have an associated defining set (see Definition \ref{definingset} below). 

The goal of this paper is to give information sets for first and second order GRM codes only in terms of their defining sets and ultimately, in terms of their basic parameters, namely,  their order and their length. In \cite{BS} we reach such a description in the case of (binary) Reed-Muller codes; however, for GRM codes we deal with a more complex structure of their defining sets, specially in the case of second-order GRM codes.

In Sections \ref{preliminaries} and \ref{2dimcyclic} we give a basic introduction to abelian codes and we show the essential tools about cyclic codes seen as two-dimensional cyclic codes, respectively. Then, in Section \ref{2dimcyclic} we introduce briefly GRM codes as affine-invariant codes. In Section \ref{mainsection} we present our main results, with two subsections corresponding to first and second-order GRM codes respectively. Finally, in Section \ref{examples} we include some examples and 
in Section \ref{conclusions} we give our conclusions and we mention some related open problems.

\section{Preliminaries}\label{preliminaries}
We need to introduce some basic notation and definitions about abelian codes that will be needed all throughout this paper. For our purposes we only need to consider two-dimensional abelian codes so we are giving this introduction just in that case.
First, we will always denote by $\F$ the field of $q$ elements where $q$ is a power of a prime $p$. Then, we denote  
$$\A(r_1,r_2):=\F[X_1,X_2]/\left\langle X_1^{r_1}-1,X_2^{r_2}-1\right\rangle$$ 
and we consider that any (two-dimensional) abelian code is an ideal in $\A(r_1,r_2)$. We identify the codewords with polynomials $P(X_1,X_2)$ such that every monomial satisfies that the degree of the indeterminate $X_i$ is in $\Z_{r_i}$. We write $n=r_1\cdot r_2$ and we always assume that $r_i$ and $q$ are coprime, for $i=1,2$, that is, we assume that $\A(r_1,r_2)$ is a semisimple algebra.

We will base our construction on the study of the so-called defining set of an abelian code, so let us introduce that definition.

\begin{definition}\label{definingset}
Let $\C\leq \A(r_1,r_2)$ be an abelian code. Fixed $\alpha_i$ a primitive $r_i$-th root of unity in some extension of $\F$,  $i=1,2$, the \textit{defining} set of $\C$ with respect to $\alpha=\{\alpha_1,\alpha_2\}$ is 
$$ \begin{array}{rl}
D_\alpha\left(\C\right)=&\left\{ (a_1,a_2)\in  \Z_{r_1}\times\Z_{r_2} \mid\right.\\
&  \left.P(\alpha_1^{a_1},\alpha_l^{a_2})=0 \text{ for all } P\in \C\right\}.
\end{array}$$
\end{definition}

\begin{remark}\label{definingsetcyclic}
The simply adaptation of the previous definition to the case $\A(n)=\F[X]/\langle X^n-1\rangle$ coincides with the classical notion of defining set of a cyclic code of length $n$.
\end{remark}

It is clear that the defining set of a code depends on the election of the primitive roots of unity and it can be proven that fixed $\alpha$ any abelian code is completely determined by its defining set.

In particular, we are working with the description of defining sets as union of some especial subsets of integers. Let us introduce them.

\begin{definition}
 Let $a$ and $r$ be integers. The $q$-cyclotomic coset of $a$ modulo $r$ is the set 
\[ C_{q,r}(a)=\left\{a\cdot q^i \mid i \in \N\right\} \subseteq \Z_r.\]
\end{definition}

The following definition gives the generalization of $q$-cyclotomic coset to the two-dimensional space $\A(r_1,r_2)$.

\begin{definition}\label{qorbita}
Given $(a_1,a_2)\in \Z_{r_1}\times\Z_{r_2}$, its \textit{$q$-orbit} modulo  $\left(r_1,r_2 \right)$ is the set
$$ Q_q(a_1,a_2)=\left\{\left(a_1\cdot q^{i} , a_2\cdot q^{i}  \right)\mid i\in \N\right\} \subseteq \Z_{r_1}\times\Z_{r_2}.$$
\end{definition}

It is easy to see that, under our assumption $\gcd(r_1,q)=\gcd(r_2,q)=1$, the $q$-orbits define a partition in $\Z_{r_1}\times\Z_{r_2}$; moreover, any defining set is a (disjoint) union of $q$-orbits and conversely, any union of $q$-orbits define a defining set (with respect to a fixed $\alpha$) and so an abelian code. Since we will only consider $q$-cyclotomic cosets and $q$-orbits, for the sake of simplicity, we shall write $C_r(a)$ and $Q(a_1,a_2)$ without mention to $q$.

To do the required computations on the elements in $\D(\C)$ we will select a subset satisfying the following definition.

	\begin{definition}\label{restricted representatives}
Let $D$ be a union of $q$-orbits modulo $(r_1,r_2)$ and $\overline{D}\subset D$ a complete set of representatives. Then $\overline D$ is called a set of restricted representatives if $\overline{D}_1$  is a complete set of representatives of the $q$-cyclotomic cosets modulo $r_1$ in $D_1$ ($\overline{D_1}$ and $D_1$ the projections of $\overline{D}$ and $D$ onto $\Z_{r_1}$ respectively).
 \end{definition}

To finish this section we need to give the formal notion of information set in our ambient space $\A(r_1, r_2)$.

\begin{definition}
 An information set for an abelian code $\C\subseteq\A(r_1, r_2)$ with dimension $k$ is a set $\I\subseteq \Z_{r_1}\times\Z_{r_2}$ such that $\left|\I\right|=k$ and $\C_\I=\F^k$, where $\C_\I$ denotes the projection of the codewords in $\C$ to the positions corresponding with $\I$. 
 
 The complementary set $\left(\Z_{r_1}\times\Z_{r_2}\right)\setminus \I$ is called a set of check positions for $\C$.
\end{definition}

\section{Information sets for cyclic codes seen as two-dimensional abelian codes}\label{2dimcyclic}

According to the notation that has been introduced in the previous sections we take a cyclic code, denoted by $\C^*$, with length $n=r_1\cdot r_2$, and assume that $r_1,r_2>1, \gcd(r_1,r_2)=1$, $\gcd(r_i,q)=1$ for $i=1,2$. Then let $T:\Z_n\rightarrow \Z_{r_1}\times \Z_{r_2}$ be an isomorphism, and let us denote $T=(T_1,T_2)$; that is, $T_i(e)$ is the projection of $T(e)$ onto $\Z_{r_i}$, for $i=1,2$ and any $e\in \Z_n$. 

According to Remark~\ref{definingsetcyclic}, let $\D^*=D_\alpha(\C^*)\subseteq \Z_n$ be the defining set of $\C^*$ with respect to an arbitrary primitive $n$-th  root of unity $\alpha$. Then, since there exist integers $\eta_1,\eta_2$ such that $\eta_1 r_1+\eta_2 r_2=1$, we have that $\alpha_1=\alpha^{\eta_2 r_2}$ and $\alpha_2=\alpha^{\eta_1 r_1}$ are primitive $r_1$-th and $r_2$-th roots of unity respectively. Fix $T$ an isomorphism as above and set $T(1)=(\delta_1,\delta_2)$; observe that $\gcd(\delta_1,r_1)=1$ and $\gcd(\delta_2,r_2)=1$. We define the abelian code $\C=\C_{(\C^*,T)}\leq \A(r_1,r_2)$ as the code with defining set $\D=D(\C)=T(\D^*)$, with respect to $(\beta_1,\beta_2)=(\alpha_1^{\delta_1^{-1}},\alpha_2^{\delta_2^{-1}})$. In this situation, we have that $\C$ is the image of $\C^*$ by the map 
  \begin{eqnarray*}
  \nonumber \A(n) &\longrightarrow& \A(r_1,r_2)\\
   \sum a_{i}X^i&\hookrightarrow& \sum b_{jl}X^jY^l,
  \end{eqnarray*}
  where $b_{jl}=a_i$ if and only if $T(i)=(j,l)$. Therefore, $\I$ is an information set for $\C$ if and only if $T^{-1}(\I)$ is an information set for $\C^*$. Usually, we omit the reference to the original cyclic code and the isomorphism $T$ in the notation of the new abelian code, and we will write $\C$ instead of $\C_{(\C^*,T)}$; those references will by clear by the context.
  
  For the rest of this section we assume that we have fixed a choice of $\alpha$ and $T$ (consequently, the roots $\beta_1,\beta_2$ are also fixed).

\begin{definition}\label{suitable}
 Let $\overline {\D^*} \subseteq \D^*$ be a complete set of representatives of the $q$-cyclotomic cosets modulo $n$ in $\D^*$. Then $\overline{\D^*}$ is said to be a suitable set of representatives if $T(\overline{\D^*})$ is a set of restricted representatives of the $q$-orbits in $T(\D^*)$ (see Definition \ref{restricted representatives}).
\end{definition}

It can be proved that we will always be able to take a suitable set of representatives in $\D^*$ (see \cite{BS}).

The goal of this section is to give an information set for the abelian code $\C\leq \A(r_1,r_2)$ just in terms of the defining set of the original cyclic code, this is established in Theorem \ref{infosetabelian}. To get it we need to introduce some special subsets in $\overline{\D^*}$ on which we will do the computations. 

\begin{definition}\label{defU}
Given $\overline {\D^*}\subseteq \D^*$ a suitable set of representatives, we define $\sim$ the equivalence relation on $\overline {\D^*}$ given by the rule 
\begin{equation}\label{relacionU}
 a\sim b\in \overline {\D^*} \text{ if and only if } a\equiv b \mod r_1.
\end{equation}
  We denote by $\U \subseteq \Db$ a complete set of representatives of the equivalence classes given by $\sim$. Then, for any $u\in \U$ we write 
\begin{equation*}
\Or(u)=\{a\in\Db \mid a\sim u \}.
\end{equation*}
\end{definition}

\begin{remark}
It is easy to see that for any $e\in \overline{\D^*}$ there exists a unique $u\in \U$ such that $T_1(u)=T_1(e)$. By abuse of notation, we will write 
$$C_{r_1}(e)=C_{r_1}(T_1(e))=C_{r_1}(T_1(u))=C_{r_1}(u).$$
\end{remark}

Next result gives what we were aiming for in this section.

\begin{theorem}\label{infosetabelian}
In the setting described above, let $\C^*\leq \A(n)$ be a cyclic code of length $n$, $\Db$ a suitable set of representatives of its defining set, with respect to a given $n$-th root of unity, and $\U$ as in Definition \ref{defU}. Let $\C\leq \A(r_1,r_2)$ be the abelian code $\C_{(\C^*,T)}$.  In this situation we consider the values
 $$\left\{M(u)=\frac{1}{|C_{r_1}(u)|}\sum_{v\in \Or(u)}|C_n(v)| \mid u\in \U\right\}.$$
Then we define 
$$f_1=\max\limits_{u\in \U}\{M(u)\}, \qquad f_i=\max\limits_{u\in \U}\{M(u)\mid M(u)<f_{i-1}\}.$$
which yields the sequence $f_{1}>\dots>f_{s}>0=f_{s+1}$. On the other hand, for any $k=1,\dots,s$ we define
$$g_k=\sum_{\doble{u\in \U}{M(u)\geq f_k}}|C_{r_1}(u)|.$$
Then, the set 
 \begin{eqnarray*}\label{checkpositions}
 \Gamma(\C)=&\{(i_1,i_2)\in \Z_{r_1}\times\Z_{r_2}\mid  \text{ there exists } 1\leq j\leq s \text{ with }\\
& f_{j+1}\leq i_2< f_{j}, \text{ and } 0\leq i_1<g_{j}\}.
\end{eqnarray*}
is a set of check positions for $\C$. Furthermore, the set $T^{-1}\left(\Gamma(\C)\right)$ is a set of check positions for $\C^*$.
\end{theorem}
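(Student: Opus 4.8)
The plan is to reduce the statement to a single nonsingularity (equivalently, linear-independence) assertion about a spectral parity-check matrix, and then to prove that assertion by exploiting the staircase shape of $\Gamma(\C)$.

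First, the final sentence is essentially free: by the correspondence recalled just before Definition \ref{suitable}, a set $\I$ is an information set for $\C$ if and only if $T^{-1}(\I)$ is an information set for $\C^*$; taking complements and using that $T$ is a bijection, $T^{-1}(\Gamma(\C))$ is a set of check positions for $\C^*$ as soon as $\Gamma(\C)$ is one for $\C$. So I would concentrate entirely on $\C\le\A(r_1,r_2)$. Recall that $\C$ is the kernel of the spectral parity-check matrix $H$ with rows indexed by $(a_1,a_2)\in\D$, columns indexed by $(i_1,i_2)\in\Z_{r_1}\times\Z_{r_2}$, and entries $\beta_1^{a_1 i_1}\beta_2^{a_2 i_2}$. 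A set $\Gamma$ of size $|\D|=n-\dim\C$ is a set of check positions precisely when the square submatrix $H_{\Gamma}$ obtained by keeping the columns in $\Gamma$ is nonsingular; dually, this says that the only polynomial $P(X_1,X_2)=\sum_{(i_1,i_2)\in\Gamma}c_{i_1 i_2}X_1^{i_1}X_2^{i_2}$ with $P(\beta_1^{a_1},\beta_2^{a_2})=0$ for every $(a_1,a_2)\in\D$ is $P=0$. This ``monomials vanishing on $\D$'' formulation is the statement I would aim to prove.

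Second, I would dispatch the cardinality bookkeeping, which is routine. Writing $l(u)$ for the index with $M(u)=f_{l(u)}$, one has $g_k=\sum_{u:\,l(u)\le k}|C_{r_1}(u)|$, so that by Abel summation $|\Gamma(\C)|=\sum_{j=1}^{s}g_j(f_j-f_{j+1})=\sum_{u\in\U}|C_{r_1}(u)|\sum_{j\ge l(u)}(f_j-f_{j+1})=\sum_{u\in\U}|C_{r_1}(u)|M(u)=|\D^*|=|\D|$, using $f_{s+1}=0$ and the definition of $M(u)$. Alongside this I would record the key combinatorial fact that the columns of $\D$ are balanced: since multiplication by $q$ permutes $\D$ and acts transitively on each $C_{r_1}(u)$, every first coordinate $i_1\in C_{r_1}(u)$ occurs in exactly $M(u)$ elements of $\D$. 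Dually, for each fixed $i_1$ the positions of $\Gamma(\C)$ in that column form the contiguous block $\{0,1,\dots,m_{i_1}-1\}$ with $m_{i_1}=f_{j_0(i_1)}$ and $j_0(i_1)=\min\{j:g_j>i_1\}$; the sequence $m_0\ge m_1\ge\cdots$ is non-increasing, so $\Gamma(\C)$ is a Young-diagram-shaped region whose column heights are exactly the multiset $\{M(u)\text{ with multiplicity }|C_{r_1}(u)|\}$, the same heights occurring in $\D$, only repacked into the lowest-indexed columns.

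Third comes the heart of the argument: proving the monomial-independence claim. I would induct on $s$, peeling off the top band. The rows $i_2\in[f_2,f_1)$ of $\Gamma(\C)$ occur only in the columns $i_1\in[0,g_1)$, where $g_1=\sum_{M(u)=f_1}|C_{r_1}(u)|$ counts exactly the first coordinates lying in the heaviest cosets. Viewing $P$ as a polynomial in $X_2$ with coefficients in $\F[X_1]/\langle X_1^{r_1}-1\rangle$ and evaluating at $X_1=\beta_1^{a_1}$, the vanishing conditions coming from the heavy columns of $\D$ should isolate the top-band coefficients through a Vandermonde system in the distinct powers $\beta_2^{a_2}$, legitimate because $\beta_2$ is a primitive $r_2$-th root. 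The main obstacle is precisely the mismatch flagged above: $\Gamma(\C)$ uses the columns $0,\dots,g_1-1$, whereas $\D$ is supported on the genuine cyclotomic-coset columns, so to solve for and then eliminate the top band I must show that the relevant $\beta_1$-minor, built from the exponents $i_1\in\{0,\dots,g_1-1\}$ against the first coordinates occurring in the heavy cosets, is nonsingular. I expect this to follow from a generalized-Vandermonde / distinct-roots argument using that $\beta_1$ is a primitive $r_1$-th root and $\gcd(r_1,r_2)=1$, but it is the step requiring genuine care, and I anticipate it is exactly where the notions of suitable and restricted representatives (Definitions \ref{restricted representatives} and \ref{suitable}) are really used. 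Once the top band is forced to vanish, deleting those rows and columns leaves a region of the same staircase type associated with $f_2>\cdots>f_s$, and the inductive hypothesis closes the proof.
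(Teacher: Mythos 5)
Your overall strategy is necessarily different from the paper's, because the paper gives no self-contained argument at all: its proof of Theorem \ref{infosetabelian} is an assembly of cited results, namely Theorem 20, Lemma 18 and Proposition 19 of \cite{BS2} together with Theorem 20 of \cite{BS}, plus the observation that those proofs are independent of $q$. Your first two paragraphs are correct and in fact reproduce exactly what the cited Lemma 18 and Proposition 19 supply: the reduction to nonsingularity of the spectral submatrix, the count $|\Gamma(\C)|=\sum_j g_j(f_j-f_{j+1})=\sum_{u\in\U}|C_{r_1}(u)|\,M(u)=|\D|$, and the balanced-column fact that every $i_1\in C_{r_1}(T_1(u))$ carries exactly $M(u)$ elements of $\D$; note that this last fact, together with the disjointness of the cosets $C_{r_1}(T_1(u))$ for distinct $u\in\U$, is where suitability/restrictedness of $\Db$ is genuinely used --- not, as you anticipate, in the minor you worry about.

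The problems are in your third paragraph. First, the step you flag as the ``main obstacle'' and leave unproven is not an obstacle: the minor you describe has exponent set $\{0,1,\dots,g_1-1\}$, which is \emph{consecutive}, so against the $g_1$ distinct nodes $\beta_1^{a_1}$ (the $a_1$ run over a union of cosets, hence are distinct mod $r_1$, and $\beta_1$ is a primitive $r_1$-th root) it is an ordinary Vandermonde matrix. Appealing instead to ``generalized-Vandermonde'' nonsingularity would be unsound, since over finite fields generalized Vandermonde matrices can be singular. Cleaner still, no minor is needed: writing $P=\sum_{i_2}Q_{i_2}(X_1)X_2^{i_2}$, for each heavy $a_1$ the polynomial $P(\beta_1^{a_1},X_2)$ has $X_2$-degree $<f_1$ and vanishes at the $f_1$ distinct points $\beta_2^{a_2}$ with $(a_1,a_2)\in\D$, hence is identically zero, so $Q_{i_2}(\beta_1^{a_1})=0$ for every $i_2$; then each top-band $Q_{i_2}$ ($f_2\le i_2<f_1$) has $X_1$-degree $<g_1$ and $g_1$ distinct roots, hence is zero. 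Second, your closing induction (``deleting those rows and columns leaves a region of the same staircase type'') fails as stated: once the top band vanishes, the surviving support in band $j\ge 2$ still occupies columns $[0,g_j)$, with widths $g_2,\dots,g_s$, whereas the staircase attached to $\D$ with its heavy columns removed has widths $g_j-g_1$; there is no self-similar instance to which an inductive hypothesis applies. The repair is a \emph{cumulative} induction: at stage $j$, using that bands $1,\dots,j-1$ are already zero, each $a_1$ in a column with $M(u)=f_j$ gives $P(\beta_1^{a_1},X_2)$ of degree $<f_j$ with $f_j$ distinct roots, hence $Q_{i_2}(\beta_1^{a_1})=0$ for all $i_2<f_j$; pooling these with the vanishings accumulated at earlier stages, each band-$j$ coefficient $Q_{i_2}$ ($f_{j+1}\le i_2<f_j$) has degree $<g_j$ and $g_j$ distinct roots, hence is zero. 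With these two corrections your outline becomes a complete, self-contained proof --- something the paper itself does not provide.
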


\begin{proof}
The reader may check that the proof is constructed by using some results in \cite{BS} and \cite{BS2}. Furthermore, it can be seen that the proofs do not depend on the values of $q$.

Specifically, we begin by taking Theorem 20 in \cite{BS2} which gives us the description of an information set for an arbitrary abelian code from some calculations on the cardinalities of the $q$-orbits in its defining set. Then we use Lemma 18 and Proposition 19 in \cite{BS2} which relate the cardinalities of  the cyclotomic cosets in the defining set of a cyclic code and the cardinalities of the $q$-orbits for the corresponding abelian code obtained as in the beginning of the previous section. Finally, Theorem 20 in \cite{BS} complete the computations we need.
\end{proof}

\section{Generalized Reed-Muller codes}\label{GRMcodes}

We are seeing GRM codes as affine-invariant codes (and, in particular,  as extended cyclic codes) (see \cite{handbook2}). Let us introduce this family of codes briefly. The ambient space will be the group algebra $\F G$, where $G$ is the additive group of the field with $q^m$ elements, for some $m\in \N$. All throughout we fixed $m$ and we set $n=q^m-1$ (this notation is being consistent with that given in the previous sections). 

Since $G$ is an elementary abelian group of order $|G|=q^m$ and $G^*=G\setminus\{0\}$ is a cyclic group, we may take a generator element $\langle \alpha \rangle=G^*$ (note that $\alpha$ is also a primitive $n$-th root of unity in the field extension of $\F$ with $q^m$ elements). Then, we write the elements in $\F G$ as
	\begin{equation}\label{polynomial}
	 b X^0+\sum\limits_{i=0}^{n-1} a_i X^{\alpha^i}\quad (a_i,b\in \F).
	\end{equation}
	
	On the other hand, we consider $S_G$, the group of automorphisms of $G$, acting on $\F G$ via 
		$$\tau\left(b X^0+\sum\limits_{i=0}^{n-1} a_i X^{\alpha^i}\right)=b X^{\tau(0)}+\sum\limits_{i=0}^{n-1} a_i X^{\tau(\alpha^i)}$$
	for any $\tau \in S_G$. Then, we define the permutation automorphisms group of a code (ideal) $\C\leq \F G$ as
	$$\PAut(\C)=\{\tau\in S_{G}\mid \tau(\C)=\C\}.$$

Now, we can give the definition of affine-invariant code.

\begin{definition}
Let $\C\leq \F G$ be a code:
\begin{enumerate}[a)]
	\item $\C$ is an extended cyclic code if for any $b X^0+\sum\limits_{i=0}^{n-1} a_i X^{\alpha^i}\in \C$ one has that $b X^0+\sum\limits_{i=0}^{n-1} a_i X^{\alpha^{(i+1)}}\in \C$ and $b +\sum\limits_{i=0}^{n-1} a_i =0\in \F$.
	\item $\C$ is an \emph{affine-invariant} code if it is an extended cyclic code and ${\rm GA}(G,G)\subseteq \PAut(\C)$ where
	$${\rm GA}(G,G)=\{x\mapsto ax+b\mid x\in\F G, a\in G^*, b\in G\}.$$
\end{enumerate} 
\end{definition}

For any affine-invariant code $\C\leq \F G$ we denote by $\C^*$ the punctured code at the position $X^0$. Then, $\C^*$ is a \textit{cyclic} code in the sense that	it is the projection to $\F G^*$ of the image of a cyclic code via the map 
 \begin{eqnarray}\label{inmersionciclicos}
  \nonumber \A(n) &\longrightarrow& \F G\\
   \sum\limits_{i=0}^{n-1}a_{i}X^i&\hookrightarrow& \left(-\sum\limits_{i=0}^{n-1} a_i\right)X^0+\sum\limits_{i=0}^{n-1}a_i X^{\alpha^i}.	  
  \end{eqnarray}
	
\bigskip	
To give our new construction of information sets we will work with the so-called defining set of an affine-invariant code. Let us note that this notion is not the same as it was given in Definition \ref{definingset}; we will clarify the difference below.

\begin{definition}  Let $\C\leq\F G$ be an affine-invariant code and $\langle\alpha\rangle=G^*$. For any $s\in\{0,\dots,n=q^m-1\}$ we consider the $\F$-linear map $\phi_s:\F G\rightarrow G$ given by
\begin{equation*}
 \phi_s\left(b X^0+\sum\limits_{i=0}^{n-1} a_i X^{\alpha^i}\right)=0^s+\sum\limits_{i=0}^{n-1} a_i \alpha^{is}
\end{equation*}
where we assume $0^0=1\in\F$ by convention. Then the set 
 $$D(\C)=\{i\mid \phi_i(x)=0 \text{ for all } x\in \C\}$$
 is called the defining set with respect to $\alpha$ of the affine-invariant code $\C$.
\end{definition}

It is clear that $D(\C)$ depends on the election of $\alpha\in G$, however all throughout this paper it will be fixed and an arbitrary one so, for the sake of simplicity, we do not include it in the notation. 

It is also well known that any affine-invariant code is totally determined by its defining set (see \cite{handbook2}). On the other hand, it is easy to see that $D(\C)$ is a union of $q$-cyclotomic cosets modulo $n$; moreover, since $\C$ is an extended-cyclic code, one has that $0\in D(\C)$. Conversely, any subset of $\{0,\dots,n\}$ which is a union of $q$-cyclotomic cosets and contains $0$ defines an affine-invariant code in $\F G$.

It essential for us to note that with regard to (\ref{inmersionciclicos}), we talk about the defining set of $\C^*$ when we are making reference to that of the corresponding cyclic code in $\A(n)$. So,  the defining set of $\C^*$ is $D(\C^*)=D(\C)\setminus\{0\}$ (according to Definition \ref{definingset}).

\begin{remark}\label{check positions union 0}
In the context of affine-invariant codes, a set $\I\subseteq\{0,\alpha^0,\dots,\alpha^{n-1}\}$  is an information set for a code $\C\leq\F G$ with dimension $k$, if $\left|\I \right|=k$ and $\C_\I=\F^{\left|\I \right|}$.
 
 It is also important to note that, since $\C^*$ is contained in $\F G^*$ we have that any information set for it will be a subset of $\{\alpha^0,\dots,\alpha^{n-1}\}$. Furthermore, an information set for $\C^*$ is an information set for $\C$ too. However, note that if $\Gamma$ is a set of check positions for $\C^*$ then a set of check positions for $\C$ is $\Gamma\cup\{0\}$.
\end{remark}

Finally, to give the definition of GRM codes we need to introduce the notion of $q$-weight of a natural number.

\begin{definition}
For any natural number $k$ its $q$-ary expansion is the sum 
$$\sum_{r\geq 0} k_r q^r=k$$
 with $k_r\in \{0,1,\dots,q-1\}$. The $q$-weight of $k$ is ${\rm wt}_q(k)=\sum_{r\geq 0}k_r$. 
\end{definition}

\begin{definition}\label{defRM}
 Let $0< \rho\leq m(q-1)$ and $\alpha$ a generator of $G^*$. The Generalized Reed-Muller (GRM) code of order $\rho$ and length $q^m$, denoted by $R_q(\rho,m)$, is the affine-invariant code in $\F G$ with defining set (with respect to $\alpha$) 
 $$D(R_q(\rho,m))=\{0\leq i<q^m-1 \mid {\rm wt}_q(i)< m(q-1)-\rho\}.$$
\end{definition}

In this paper we are interested in GRM codes of first and second-order respectively, for them
$$D(R_q(1,m))=\{0\leq i<q^m-1 \mid {\rm wt}_q(i)< m(q-1)-1\}.$$
$$D(R_q(2,m))=\{0\leq i<q^m-1 \mid {\rm wt}_q(i)< m(q-1)-2\}.$$

\begin{remarks}
\begin{enumerate}
 \item We denote by $R^*(\rho,m)$ the punctured code of $R_q(\rho,m)$ at the position $X^0$.
 \item It is well-known that the dual codes of $R_q(1,m)$ and $R_q(2,m)$ are $R_q(m(q-1)-2,m)$ and $R_q(m(q-1)-3,m)$ respectively (see \cite{Huffman}). This fact will be important for us in the subsequent section.
\end{enumerate}
\end{remarks}
 
\section{Information sets for Generalized Reed-Muller codes}\label{mainsection}

In this main section, we are applying the results of Section \ref{2dimcyclic} to the (cyclic) punctured code of first and second-order GRM codes seen as two dimensional abelian codes. Recall that we are assuming that $n=q^m-1=r_1\cdot r_2,\ r_1,r_2>1, \gcd(r_1,r_2)=1$. (Note that the condition $\gcd(r_i,q), i=1,2$ holds by definition of $n$). Let use denote $a=Ord_{r_1}(q)$, that is, the order of $q$ modulo $r_1$; then, it is easy to see that $a\mid m$.

Since the given construction of information sets relies on the structure of the defining sets, it is more convenient to look at the dual codes instead of the original ones because they have smaller defining sets and so they are more manageable; that is, for $i=1,2$, instead of considering the code $R_q(i,m)$ we work with its dual $R_q(m(q-1)-i-1),m)$. Then, Theorem \ref{infosetabelian} will give us a set of check positions for $R_q(m(q-1)-i-1),m)$ and consequently an information set for $R_q(i,m)$.

Let us denote $\C^*=R^*_q(m(q-1)-i-1,m), i=1,2,$ and define 
\begin{eqnarray*}
 \Omega(K)&=&\left\{0< j<  q^m-1\mid {\rm wt}_q(j)=K\right\}
\end{eqnarray*}

then 
$$D^*=D(\C^*)=\bigcup_{i\in D(R(\rho,m))\setminus\{0\}}C_n(i)$$
where $\rho=m(q-1)-i-1$.

We need to describe completely the structure of $D^*$ and firstly we have to study how we can take a suitable set of representatives and the corresponding set $\U$ (see Definition \ref{defU}).

As in Section \ref{2dimcyclic} we assume that we have fixed a choice of $\alpha$, a primitive $n$-th root of unity, and $T$ an isomorphism from $\Z_n$ to $\Z_{r_1}\times\Z_{r_2}$.

\subsection{First-order GRM codes}

In this case, 
$$D(R_q(m(q-1)-2,m))=\{0\leq i<q^m-1 \mid {\rm wt}_q(i)< 2\}$$
and so $D^*=\Omega(1)$.

\begin{lemma}
In the case of $R_q(m(q-1)-2,m)$ we can take $\Db=\U=\{1\}$.
\end{lemma}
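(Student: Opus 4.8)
The plan is to show that at every level the object in play collapses to a single coset or orbit, so that one representative always suffices. First I would make $D^*$ fully explicit. Since $D^* = \Omega(1)$, the defining condition ${\rm wt}_q(i) = 1$ with $0 < i < q^m-1$ forces exactly one $q$-ary digit of $i$ to equal $1$ and all the others to vanish; hence $\Omega(1) = \{q^r \mid 0 \le r \le m-1\}$.

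Next I would identify this set with a single $q$-cyclotomic coset modulo $n$. Because $n = q^m - 1$ gives $q^m \equiv 1 \pmod n$, the coset $C_n(1) = \{q^i \bmod n \mid i \in \N\}$ equals $\{1, q, \dots, q^{m-1}\}$, and these $m$ residues are pairwise distinct in $[0,n)$. Thus $D^* = C_n(1)$ is a single cyclotomic coset, so $\Db = \{1\}$ is automatically a complete set of representatives of the $q$-cyclotomic cosets modulo $n$ contained in $D^*$.

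The key step is checking that this choice is suitable in the sense of Definition \ref{suitable}, i.e. that $T(\Db) = \{T(1)\}$ is a set of restricted representatives of the $q$-orbits in $T(D^*)$ (Definition \ref{restricted representatives}). Since $T$ intertwines multiplication by $q$, so that $T(q^i) = q^i\,T(1) = (\delta_1 q^i, \delta_2 q^i)$, the image $T(C_n(1))$ is exactly the single $q$-orbit $Q(\delta_1,\delta_2)$ with $T(1) = (\delta_1,\delta_2)$; hence $T(D^*)$ is one $q$-orbit whose unique representative is $T(1)$. Projecting onto $\Z_{r_1}$ yields $D_1 = \{\delta_1 q^i \bmod r_1 \mid i \in \N\} = C_{r_1}(\delta_1)$, again a single $q$-cyclotomic coset (here $\gcd(\delta_1,r_1)=1$ gives $|C_{r_1}(\delta_1)| = a$). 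Consequently $\overline{D_1} = \{\delta_1\} = \{T_1(1)\}$ is a complete set of representatives of the $q$-cyclotomic cosets modulo $r_1$ in $D_1$, which is precisely the restricted-representatives condition. I expect this identification of $D_1$ with a single coset — really the observation that $T$ carries the one coset $C_n(1)$ to one $q$-orbit whose $\Z_{r_1}$-projection is itself one coset — to be the only point requiring care.

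Finally, since $\Db = \{1\}$ is a singleton, the equivalence relation $\sim$ of Definition \ref{defU} has a single class on it, so $\U = \{1\}$ follows at once (and $\Or(1) = \{1\}$), completing the proof.
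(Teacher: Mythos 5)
Your proof is correct and follows essentially the same route as the paper: identify $\Omega(1)=\{q^r\mid 0\le r<m\}$ as the single cyclotomic coset $C_n(1)$, from which $\Db=\U=\{1\}$ follows. The paper dismisses the suitability check as straightforward, whereas you verify it explicitly via $T(q^i)=q^i\,T(1)$, which is a fuller write-up of the same argument rather than a different one.
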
 
\begin{proof}
By definition 
$$\Omega(1)=\left\{0< j<  q^m-1\mid {\rm wt}(j)=1\right\}=\{q^i\mid 0\leq i<m\}$$
So, $\Omega(1)=C_n(1)$. Then $D^*$ has a unique $q$-cyclotomic coset and the result is straightforward.
\end{proof}

Therefore, we have the same situation as for (binary) Reed-Muller codes. Namely, from Theorem \ref{infosetabelian} we get our first main result.

\begin{theorem}\label{teorema1order}
  The set $T^{-1}\left(\Gamma\right)$, where 
  $$\Gamma=\left\{(i_1,i_2)\in\Z_{r_1}\times\Z_{r_2} \mid 0\leq i_1< a, 0\leq i_2<\frac{m}{a}\right\},$$
is a set of check positions for $R^*_q(m(q-1)-2,m)$. 
\end{theorem}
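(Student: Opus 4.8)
The plan is to apply Theorem~\ref{infosetabelian} directly, feeding it the data supplied by the preceding Lemma, namely $\Db=\U=\{1\}$. Since the index set $\U$ is a singleton, the whole apparatus of the theorem (the quantities $M(u)$, the strictly decreasing sequence $f_1>\dots>f_s$, and the partial sums $g_k$) collapses to a single computation, after which the claimed set $\Gamma$ is obtained by mere substitution into the description of $\Gamma(\C)$.

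First I would evaluate $M(1)$. Here $\Or(1)=\{1\}$, so the sum reduces to $\sum_{v\in\Or(1)}|C_n(v)|=|C_n(1)|$. The coset $C_n(1)$ is the $q$-cyclotomic coset of $1$ modulo $n=q^m-1$, whose size is the order of $q$ modulo $q^m-1$; since $q^m\equiv 1$ and no smaller positive power of $q$ is congruent to $1$ modulo $q^m-1$ (as $0<q^j-1<q^m-1$ for $0<j<m$), this size is $m$. For the denominator I would use the abuse of notation introduced after Definition~\ref{defU}: $C_{r_1}(1)=C_{r_1}(T_1(1))=C_{r_1}(\delta_1)$, where $T(1)=(\delta_1,\delta_2)$ and $\gcd(\delta_1,r_1)=1$. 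As $\delta_1$ is a unit modulo $r_1$, the size of this coset is the order $a$ of $q$ modulo $r_1$. Hence $M(1)=\frac{m}{a}$, which is a positive integer because $a\mid m$.

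Since $\U=\{1\}$ contributes the single value $M(1)=\frac{m}{a}>0$, the decreasing sequence has length $s=1$ with $f_1=\frac{m}{a}$ and $f_2=0$, and the only partial sum is $g_1=|C_{r_1}(1)|=a$. Substituting $j=1$, $f_2=0$, $f_1=\frac{m}{a}$ and $g_1=a$ into the definition of $\Gamma(\C)$ produces exactly $\{(i_1,i_2)\mid 0\le i_1<a,\ 0\le i_2<\frac{m}{a}\}=\Gamma$, and the final assertion of Theorem~\ref{infosetabelian} then gives that $T^{-1}(\Gamma)$ is a set of check positions for $\C^*=R^*_q(m(q-1)-2,m)$. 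I do not expect any genuine obstacle here: the argument is a routine specialization, and the only points demanding care are reading the two cardinalities correctly (in particular interpreting $C_{r_1}(1)$ through the stated abuse of notation as the coset of $\delta_1$) and recording that $a\mid m$, which guarantees that the range $0\le i_2<\frac{m}{a}$ is meaningful.
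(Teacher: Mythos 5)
Your proposal is correct and follows exactly the route the paper intends: the paper derives this theorem from Theorem~\ref{infosetabelian} together with the preceding lemma ($\Db=\U=\{1\}$), deferring the computation to \cite[Theorem 34]{BS}, and your explicit evaluation $M(1)=m/a$, $s=1$, $f_1=m/a$, $g_1=a$ (with the careful reading of $C_{r_1}(1)$ as the coset of $\delta_1$, hence of size $a$) is precisely that specialization carried out in full.
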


The reader may check that the value of $q$ has no influence in the proof of this theorem, so that given in \cite[Theorem 34]{BS} also works in this context.

\begin{corollary}\label{corolario1order}
Given $\Gamma$, the set of check positions described in the previous theorem, the set $\{0,\alpha^i\mid i\in T^{-1}\left(\Gamma\right)\}\subseteq G$ is an information set for $R_q(1,m)$ and $\{\alpha^i\mid i\notin T^{-1}\left(\Gamma\right)\}$ is an information set for $R_q(m(q-1)-2,m)$.
\end{corollary}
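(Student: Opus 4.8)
The plan is to derive both information sets from Theorem~\ref{teorema1order} by combining Remark~\ref{check positions union 0} with the standard linear-algebra duality between the information sets of a code and those of its dual. No new computation is required; the entire argument consists of translating correctly between the punctured code $\C^*$, the full affine-invariant code, and the two GRM codes in play.

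I would first establish the claim for $R_q(m(q-1)-2,m)$. Theorem~\ref{teorema1order} asserts that the positions corresponding to $T^{-1}(\Gamma)$ form a set of check positions for $\C^*=R^*_q(m(q-1)-2,m)$; equivalently, the complementary set $\{\alpha^i\mid i\notin T^{-1}(\Gamma)\}\subseteq G^*$ is an information set for $\C^*$. By Remark~\ref{check positions union 0}, an information set for the punctured code $\C^*$ is also an information set for the unpunctured affine-invariant code $\C=R_q(m(q-1)-2,m)$. This yields the second assertion directly.

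For $R_q(1,m)$ I would invoke duality. By the Remarks following Definition~\ref{defRM} we have $R_q(1,m)^{\perp}=R_q(m(q-1)-2,m)$, and it is a standard fact that a set of coordinate positions is an information set for a linear code if and only if its complement is an information set for the dual code (pass a generator matrix to systematic form on the chosen positions; the induced parity-check matrix is then systematic on the complementary positions). Since the complement in $G=\{0,\alpha^0,\dots,\alpha^{n-1}\}$ of the proposed set $\{0\}\cup\{\alpha^i\mid i\in T^{-1}(\Gamma)\}$ is precisely $\{\alpha^i\mid i\notin T^{-1}(\Gamma)\}$, which we have just shown to be an information set for $R_q(m(q-1)-2,m)=R_q(1,m)^{\perp}$, the proposed set is an information set for $R_q(1,m)$.

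The one point requiring care, and the main obstacle, is the bookkeeping of the extra coordinate $X^0$ together with the two length scales: the punctured code $\C^*$ is supported on the $n=q^m-1$ coordinates indexed via $T$ by $\Z_n$, whereas $R_q(1,m)$ and its dual live on all $q^m$ coordinates of $G$, one of which is the position $0$. I must ensure that the passage from check positions of $\C^*$ to an information set of $R_q(m(q-1)-2,m)$ keeps the complementation inside $G^*$, while the duality step for $R_q(1,m)$ complements inside the full set $G$, so that the coordinate $0$ is correctly assigned to the information set of $R_q(1,m)$. The dimension count is then automatically consistent, since $|\{0\}\cup\{\alpha^i\mid i\in T^{-1}(\Gamma)\}|=1+|\Gamma|=1+m=\dim R_q(1,m)$.
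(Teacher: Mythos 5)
Your proof is correct and follows essentially the same route as the paper: the paper's one-line proof invokes Theorem~\ref{teorema1order} together with Remark~\ref{check positions union 0} and the duality $R_q(1,m)^{\perp}=R_q(m(q-1)-2,m)$ noted after Definition~\ref{defRM}, which is exactly the chain of reasoning you spell out (including the correct handling of the extra coordinate $X^0$ when passing from complementation in $G^*$ to complementation in $G$). You have merely made explicit the standard information-set/dual-code complementation fact that the paper leaves implicit.
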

\begin{proof}
It follows from the last statement of Theorem \ref{infosetabelian}.
\end{proof}

The reader may see that we have obtained the same information set that was given in \cite{BS}, that is, it is valid for any value of $q\geq 2$.

\subsection{Second-order GRM codes}

In this section we will assume $q>2$ as it is a necessary condition for the results to come until the end of this paper. Therefore, the main results in this case are not valid for (binary) Reed-Muller codes. 

For second-order GRM codes we have
$$D(R_q(m(q-1)-3,m))=\{0\leq i<q^m-1 \mid {\rm wt}_q(i)< 3\}$$
and then
$$D^*=\Omega(1)\cup \Omega(2),$$
more specifically, if we write $B_1=\{2q^{i}\mid 0\leq i<m\}$ and $B_2=\left\{q^{i}+q^{j}\mid 0\leq i<j<m\right\}$ , we have $\Omega(2)=B_1\cup B_2$ and so
\begin{equation}\label{D-estrella}
  D^*=\Omega(1)\cup B_1\cup B_2.
\end{equation}
(Note that $B_1=C_n(2)$.)

Note that in this case we have that the structure of $D^*$ is more sophisticated. So, in order to be able to do the required computations we will impose the additional condition $r_1=q^a-1$ for some natural number $a$. (We will see in Section \ref{examples} that this additional condition is not a strong restriction.) 
Then the restrictions needed are
\begin{eqnarray}\label{restrictions}
  n=q^m-1=r_1\cdot r_2,\ r_1, r_2>1,\ \gcd(r_1,r_2)=1\\
	r_1=q^a-1, \text{ for some } a\in \N,\nonumber 	
\end{eqnarray}

\begin{remark}
By definition we have that the order of $q$ modulo $r_1$ is precisely $a$. Note that the notation is consistent with that given at the beginning of this section. Let us write 
$$m=ab.$$
\end{remark}

First, we need the following lemma.

\begin{lemma}\label{cardinales}
If $q>2$ ($q$ a power of a prime number $p$) and $r=q^a-1$ for some natural number $a$ then
\begin{enumerate}
	\item $C_r(1)\neq C_r(2).$
	\item $|C_r(1)|=|C_r(2)|=a.$
\end{enumerate}
\end{lemma}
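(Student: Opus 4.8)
The plan is to compute both cyclotomic cosets explicitly as sets of genuine residues in $\{0,\dots,r-1\}$, exploiting the defining relation $q^a\equiv 1\pmod r$ that comes from $r=q^a-1$. First I would record the elementary inequality $q^{a-1}<q^a-1=r$, valid because $q^{a-1}(q-1)\geq 1$; this shows that the integers $q^0,q^1,\dots,q^{a-1}$ are already reduced modulo $r$ and are pairwise distinct. Since $q^a\equiv 1\pmod r$, the coset stabilizes after $a$ steps, so $C_r(1)=\{1,q,\dots,q^{a-1}\}$ and hence $|C_r(1)|=a$ (equivalently, the multiplicative order of $q$ modulo $r$ is exactly $a$). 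This settles the first half of part (2) without ever using $q>2$.

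Next I would treat $C_r(2)=\{2q^i\bmod r\mid i\in\N\}$ by the same bookkeeping. The point where the hypothesis $q>2$ enters is the bound $2q^{a-1}<r$: indeed $r-2q^{a-1}=q^a-1-2q^{a-1}=(q-2)q^{a-1}-1\geq 0$ precisely because $q\geq 3$. Consequently each $2q^i$ with $0\leq i<a$ is already reduced modulo $r$, these $a$ integers are pairwise distinct (they are $2$ times the distinct values $q^i$, all below $r$), and $2q^a\equiv 2\pmod r$ closes the orbit. This gives $C_r(2)=\{2,2q,\dots,2q^{a-1}\}$ and $|C_r(2)|=a$, completing part (2).

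For part (1) I would observe that $2\notin C_r(1)$: every element of $C_r(1)=\{1,q,\dots,q^{a-1}\}$ is either $1$ or at least $q\geq 3$, so none equals $2$. Since the $q$-cyclotomic cosets modulo $r$ partition $\Z_r$ and $2\in C_r(2)$ while $2\notin C_r(1)$, the two cosets cannot coincide, i.e. $C_r(1)\neq C_r(2)$. The main (and essentially only) obstacle is to make explicit where $q>2$ is indispensable rather than cosmetic; both the reduction bound $2q^{a-1}<r$ in part (2) and the exclusion $2\notin C_r(1)$ in part (1) fail exactly when $q=2$ (where $C_r(2)=C_r(1)$), so the argument must be organized so that the single inequality $q\geq 3$ visibly drives both conclusions.
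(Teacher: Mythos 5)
Your proof is correct in substance but follows a genuinely different route from the paper's. The paper proves part (1) by a parity argument: if $q^\delta\equiv 2\pmod{q^a-1}$, then $q^a-1$ being even forces $q^\delta$ to be even, a contradiction. For part (2) it argues ideal-theoretically: the set $I=\{i\in\Z\mid 2q^i\equiv 2\pmod r\}$ is an ideal of $\Z$, hence generated by $\mu=|C_r(2)|$, so $\mu\mid a$; writing $a=\mu s$ and factoring $q^a-1=(q^\mu-1)(q^{\mu(s-1)}+\cdots+q^\mu+1)$ yields $2=k\,(q^{\mu(s-1)}+\cdots+q^\mu+1)$, which forces $s=1$. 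Your explicit enumeration of both cosets as integers below $r$ is more elementary and buys something real: the paper's parity step is only valid when $q$ is odd (for $q=4,8,\dots$ the modulus $q^a-1$ is odd and the contradiction evaporates), whereas your size bound --- every element of $C_r(1)$ is $1$ or at least $q\geq 3$ --- covers all prime powers $q>2$ uniformly and in addition produces the cosets explicitly. What the paper's argument buys in exchange is independence from reduced-residue bookkeeping, since it never needs the elements $2q^i$ to be literal integers smaller than $r$.

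One caveat you should repair: your inequality $(q-2)q^{a-1}-1\geq 0$ proves only $2q^{a-1}\leq r$, not the strict bound $2q^{a-1}<r$ that your enumeration uses; equality occurs exactly when $q=3$ and $a=1$, i.e. $r=2$. In that case $2\equiv 0\pmod r$ is not a reduced residue and the claim that $2,2q,\dots,2q^{a-1}$ are already reduced fails as written, although both conclusions are immediate there: $C_2(2)=\{0\}$ and $C_2(1)=\{1\}$ are distinct singletons, so $|C_r(2)|=1=a$. Adding this one-line check (or excluding the degenerate case $q^a=3$ up front) makes your argument complete.
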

\begin{proof}
To prove statement $1)$ suppose that $2\in C_r(1)$. Then, there exists $\delta\in\N$ such that $q^\delta\equiv 2$ mod $r$, that is, there exists $k\in\Z$ with $q^\delta-2=k\cdot (q^a-1)$. But, since $q$ is a power of a primer number greater than $2$, we have that $q^a-1$ is even which implies $q^\delta$ even, a contradiction. So, $2\notin C_r(1)$ and since the cyclotomic cosets define a partition we get what we wanted.

To prove statement $2)$ note first that $|C_r(1)|=a$ from the definition of $q$-cyclotomic coset modulo $r$. Now, let $\mu=|C_r(2)|$, that is, $\mu$ is the minimum positive integer such that $2\equiv q^\mu\cdot 2$ mod $r$. Let us denote $I=\{i\in \Z\mid 2\equiv q^i\cdot 2 \text{ mod } r\}$. Then, it is clear that $0,a\in I$. Moreover, given $i,j\in I$ we have that $2\equiv 2q^i$ mod $r$ implies (since $q$ is a unit in $\Z_r$) that $2q^{-i}\equiv 2$ mod $r$, $2q^{i+j}\equiv 2q^iq^j\equiv 2q^j\equiv 2$ mod $r$, and for any $h\in \Z, 2 q^{ih}\equiv 2q^iq^{i(h-1)}\equiv 2q^{i(h-1)}\equiv\dots\equiv 2q^i\equiv 2$ mod $r$. Therefore, $I$ is an ideal in $\Z$ so it is a principal ideal. Then, by definition of $\mu$ it is easy to see that $I$ is the ideal generated by $\mu$. This implies that $\mu\mid a$. Let $s\in \N$ be such that $a=\mu s$, then there exists $k\in \Z$ such that $2(q^\mu-1)=k\cdot(q^a-1)=$ so 
$$2(q^\mu-1)=k\cdot (q^\mu-1)(q^{\mu(s-1)}+q^{\mu(s-2)}+\cdots+q^\mu+1)$$
and hence
$$2=k \cdot (q^{\mu(s-1)}+q^{\mu(s-2)}+\cdots+q^\mu+1)$$
which is possible only in case $s=1$, that is, $\mu=a$. This finishes the proof.
\end{proof}

Now, we can prove that we will always may assume that the essential values $1$ and $2$ belong to the sets we have to manage in our computations.

\begin{proposition}
  Let $D^*$ be the defining set described in (\ref{D-estrella}). Then
		\begin{enumerate}
	  \item We may take $\overline{D^*}\subseteq D^*$, a suitable set of representatives, such that $1,2\in\overline{D^*}$. Moreover,	
		\item we may take $\U\subseteq \D^*$,  with $1,2\in\U$, and	
		\item $\Or(2)=\{2\}$.
	\end{enumerate}	
\end{proposition}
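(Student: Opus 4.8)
The plan is to verify the three claims by working directly with the explicit description of $D^*=\Omega(1)\cup B_1\cup B_2$ given in (\ref{D-estrella}) and using the structural facts from Lemma~\ref{cardinales}. First I would recall that $\Omega(1)=C_n(1)$ is the single $q$-cyclotomic coset $\{q^i\mid 0\leq i<m\}$ and that $B_1=C_n(2)=\{2q^i\mid 0\leq i<m\}$; these are the cosets containing the two essential values $1$ and $2$. Under the restriction $r_1=q^a-1$, Lemma~\ref{cardinales} tells us (via reduction modulo $r_1$) that $C_{r_1}(1)\neq C_{r_1}(2)$ and $|C_{r_1}(1)|=|C_{r_1}(2)|=a$, which is exactly what distinguishes the $q>2$ case and makes the projections of $1$ and $2$ onto $\Z_{r_1}$ land in distinct cyclotomic cosets.

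For statement (1), a suitable set of representatives $\overline{D^*}$ must pick one representative from each $q$-cyclotomic coset of $D^*$ while having the property that its projection $\overline{D^*}_1$ onto $\Z_{r_1}$ is a complete set of representatives of the $q$-cyclotomic cosets modulo $r_1$ appearing in $D_1^*$ (Definition~\ref{suitable} and Definition~\ref{restricted representatives}). I would argue that since $1$ and $2$ are natural representatives of $C_n(1)$ and $C_n(2)$ respectively, and since Lemma~\ref{cardinales}(1) guarantees that $T_1(1)$ and $T_1(2)$ sit in different $q$-cyclotomic cosets modulo $r_1$, choosing $1$ and $2$ as the representatives of their cosets creates no conflict with the restricted-representative condition; the remaining cosets (those inside $B_2$) can then be completed to a full suitable set by the general existence result invoked after Definition~\ref{suitable}. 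The key point to check is that forcing $1,2\in\overline{D^*}$ does not obstruct the projection $\overline{D^*}_1$ from being a valid complete set of representatives modulo $r_1$, and this is precisely where Lemma~\ref{cardinales}(1) is used.

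For statements (2) and (3), recall that $\U$ is a complete set of representatives of the equivalence classes on $\overline{D^*}$ under $a\sim b \iff a\equiv b \bmod r_1$ (Definition~\ref{defU}), and $\Or(u)=\{a\in\overline{D^*}\mid a\sim u\}$. Since we have arranged $1,2\in\overline{D^*}$ with $T_1(1)\neq T_1(2)$, the elements $1$ and $2$ lie in distinct $\sim$-classes, so both may be taken as class representatives, giving $1,2\in\U$. For statement (3), I would show $\Or(2)=\{2\}$ by checking that no other element of $\overline{D^*}$ is congruent to $2$ modulo $r_1$: the only elements of $D^*$ congruent to $2$ modulo $r_1=q^a-1$ within the coset structure are the powers $2q^i$, and since $\overline{D^*}$ contains exactly one representative per $q$-cyclotomic coset of $D^*$ (namely $2$ for $C_n(2)=B_1$), while no element of $\Omega(1)$ or $B_2$ reduces to $2q^j$ modulo $r_1$ for the chosen representatives, the only survivor is $2$ itself.

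The main obstacle will be statement (3): I must rule out that some representative $q^i+q^j\in B_2$ happens to be congruent to $2$ modulo $r_1=q^a-1$. The natural approach is to reduce exponents modulo $a$ (since $q^a\equiv 1 \bmod r_1$), so that $q^i+q^j\equiv q^{i\bmod a}+q^{j\bmod a} \bmod r_1$, and then argue that a sum of two distinct powers of $q$ reduced this way can equal $2=2q^0$ modulo $r_1$ only in degenerate cases that are excluded by the distinctness $i<j$ together with the parity/size argument underlying Lemma~\ref{cardinales}. Carefully organizing which elements of $B_2$ fall into which $\sim$-classes — and confirming that the suitable set $\overline{D^*}$ can be chosen so that the class of $2$ is a singleton — is the delicate bookkeeping step; everything else reduces to the two conclusions of Lemma~\ref{cardinales} and the definitions.
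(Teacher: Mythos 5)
Your handling of statements (1) and (2) is essentially the paper's own argument: everything reduces to Lemma \ref{cardinales}, which for $r_1=q^a-1$ gives $C_{r_1}(1)\neq C_{r_1}(2)$, so that $1$ and $2$ may be taken as representatives of their cosets and then lie in distinct $\sim$-classes. The genuine gap is in statement (3), exactly at the step you flag as delicate, and your proposed resolution fails. The ``degenerate case'' is \emph{not} excluded by $i<j$: whenever $i\equiv j\pmod{a}$ with $i<j$ (such pairs exist because $r_2>1$ forces $b=m/a\geq 2$), one gets $q^i+q^j\equiv 2q^{i\bmod a}\pmod{r_1}$; in particular $1+q^a\in B_2$ and $1+q^a\equiv 2\pmod{q^a-1}$. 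No parity or size argument can rule this out, because the congruence simply holds. Moreover, this cannot be repaired by a cleverer choice of representatives: $C_n(1+q^a)$ is a $q$-cyclotomic coset distinct from $C_n(2)$ (digit multiset $\{1,1\}$ versus $\{2\}$), every one of its elements reduces modulo $r_1$ to an element of $C_{r_1}(2)$, and the restricted-representative condition of Definition \ref{restricted representatives} then forces its chosen representative to be congruent modulo $r_1$ to the chosen representative of $C_n(2)$. Hence the $\sim$-class of $2$ contains at least two elements of $\Db$ for every admissible choice; statement (3) is not merely unproved but false as written. The paper's own example exhibits this: for $q=5$, $m=3$, $r_1=4$ one has $B_2=C_n(6)=\{6,26,30\}$, all elements $\equiv 2\pmod 4$, so $\Or(2)=\{2,6\}$.

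You should know that the paper's proof stumbles at the same point: it claims that $e\in B_2$ has ${\rm wt}_q(e)=2$ ``and so'' $e\notin C_{r_1}(2)$, but the elements of $C_{r_1}(2)$ themselves have $q$-weight $2$; the invariant separating cosets modulo $r_1$ is the multiset of $q$-ary digits, and the pattern $\{1,1\}$ collapses to $\{2\}$ precisely when $i\equiv j\pmod a$. The statement that is true, and that the later computation actually needs, is $\Or(1)=\{1\}$: following your own plan of reducing exponents modulo $a$, any $e\in\Db\setminus\{1\}$ reduces to either $2q^{i'}$ or $q^{i'}+q^{j'}$ with $0\leq i'<j'<a$, and neither can equal $1$ in $\Z_{r_1}$ (the former by Lemma \ref{cardinales}, the latter because $1<q^{i'}+q^{j'}<r_1$). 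With this corrected proposition the cosets of $q^i+q^j$, $i\equiv j\pmod a$, fall into the class of $2$, and the values computed in the proof of Theorem \ref{main2order} get swapped: $M(1)=b$ and $M(2)=b(b+1)/2$. Since $|C_{r_1}(1)|=|C_{r_1}(2)|=a$ by Lemma \ref{cardinales}, the sequences $f_k$ and $g_k$, and therefore the set $\Gamma$, are unchanged, so the main theorem survives; but neither your argument nor the paper's establishes statement (3) as stated.
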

\begin{proof}
Note first that by Lemma \ref{cardinales} we have that $C_n(1)\neq C_n(2)$. Now, we can take $1,2$ in a suitable set of representatives, $\Db$, if and only if $T(1)$ and $T(2)$ belong to a restricted set of representatives  of the $q$-orbits in $T(D^*)$ if and only if $T(1)$ and $T(2)$ do not belong to the same $q$-cyclotomic coset modulo $r_1$; since $r_1=q^a-1$, this last condition holds again by Lemma \ref{cardinales}. Furthermore, this also implies that $1\not\equiv 2$ mod $r_1$ and then we can take them as elements in $\U$.

Finally, as we have seen, $1\notin \Or(2)$ and if $e\in \U\setminus\{1,2\}$ then $e\in B_2$, which implies that ${\rm wt}_q(e)=2$ and so $e\notin C_{r_1}(2)$. This means that $e\notin\Or(2)$. We can conclude that $\Or(2)=\{2\}$.
\end{proof}

From the previous result we have that we may treat separately the value $2\in \U$. This is a crucial fact in order to be able to use the computations made in \cite{BS}, this will be describe explicitly within the proof of Theorem \ref{main2order}.

To give our main result we only need one more lemma.

\begin{lemma}\label{cardinaldeU}
For any election of the set of representatives $\U$ one has that $$\left|\U\right|=2+\left|\{C_{r_1}(e)\mid e\in\U\cap B_2\}\right|=2+\left\lfloor a/2\right\rfloor.$$
\end{lemma}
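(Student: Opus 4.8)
The plan is to translate both equalities into a count of $q$-cyclotomic cosets modulo $r_1=q^a-1$. The key structural input is the restricted/suitable representative framework fixed in Section \ref{2dimcyclic}. Since $T$ is a ring isomorphism with $T(1)=(\delta_1,\delta_2)$, one has $T_1(e)=\delta_1 e \bmod r_1$ with $\gcd(\delta_1,r_1)=1$, so $T_1(e)=T_1(e')$ if and only if $e\equiv e'\pmod{r_1}$, i.e. if and only if $e\sim e'$. Hence, by the Remark after Definition \ref{defU}, $\U$ is a transversal of $\sim$ on $\overline{D^*}$, and $|\U|$ equals the number of distinct values $T_1(e)$, $e\in\overline{D^*}$. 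On the other hand, because $T(\overline{D^*})$ is a set of restricted representatives, the set $\{T_1(e)\mid e\in\overline{D^*}\}$ meets each $q$-cyclotomic coset modulo $r_1$ inside $D_1:=\{T_1(e)\mid e\in D^*\}$ exactly once. Combining these two facts, the assignment $u\mapsto C_{r_1}(u)$ is a bijection from $\U$ onto the set of $q$-cyclotomic cosets modulo $r_1$ meeting $D_1$; in particular it is injective on $\U$, and $|\U|$ equals the number of such cosets.

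First I would deduce the first equality. As $\Omega(1)=C_n(1)$ and $B_1=C_n(2)$ are single $q$-cyclotomic cosets modulo $n$ and $1,2\in\overline{D^*}$, the unique representatives of these two cosets in $\overline{D^*}$ are $1$ and $2$; thus $\U\setminus\{1,2\}\subseteq B_2$, and since $1,2\notin B_2$ we obtain $\U\cap B_2=\U\setminus\{1,2\}$. The injectivity of $u\mapsto C_{r_1}(u)$ then gives $|\U\cap B_2|=|\{C_{r_1}(e)\mid e\in\U\cap B_2\}|$, whence $|\U|=2+|\{C_{r_1}(e)\mid e\in\U\cap B_2\}|$.

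It then remains to show that $D_1$ meets exactly $2+\lfloor a/2\rfloor$ cosets. Because $T_1(e)=\delta_1(e\bmod r_1)$ and multiplication by the unit $\delta_1$ permutes the $q$-cyclotomic cosets modulo $r_1$, I may carry out the count using the residues $e\bmod r_1$, where $q^a\equiv 1\pmod{r_1}$. Writing $i'=i\bmod a$, the reduction of $\Omega(1)$ gives only $C_{r_1}(1)$, that of $B_1$ gives only $C_{r_1}(2)$, while an element $q^i+q^j$ of $B_2$ reduces to $q^{i'}+q^{j'}$, yielding $C_{r_1}(2)$ when $i'=j'$ and a ``two distinct powers'' residue otherwise; since $m=ab$ with $b\ge 2$ (forced by $r_2>1$), every pair $(i',j')$ is attained by some $0\le i<j<m$. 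Identifying cosets modulo $q^a-1$ with cyclic-shift classes of length-$a$ base-$q$ digit strings (multiplication by $q$ is the cyclic shift), the three types ``one digit $1$'', ``one digit $2$'' and ``two digits $1$'' are pairwise distinct necklaces, and the two-ones necklaces are parametrized by the gap $d\in\{1,\dots,a-1\}$ modulo the symmetry $d\leftrightarrow a-d$, of which there are $\lfloor a/2\rfloor$. Hence $D_1$ meets $2+\lfloor a/2\rfloor$ cosets, so $|\U|=2+\lfloor a/2\rfloor$, and since $C_{r_1}(1),C_{r_1}(2)$ are the images of $1,2\in\U$, the remaining $\lfloor a/2\rfloor$ two-ones cosets are exactly $\{C_{r_1}(e)\mid e\in\U\cap B_2\}$.

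I expect the heart of the argument, and the only place where $q>2$ is genuinely used, to be this necklace count together with the distinctness claims: one must check that the two-ones residues never coincide with $C_{r_1}(1)$ or $C_{r_1}(2)$ (for $q=2$ the digit $2$ is illegal and indeed $C_{r_1}(2)=C_{r_1}(1)$), and that the gap $d$ taken modulo $d\leftrightarrow a-d$ is a complete invariant of the corresponding coset. By comparison, the passage from $\U$ to the coset count is purely formal, being a direct consequence of the restricted-representative framework already in place.
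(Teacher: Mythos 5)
Your proposal is correct, but it follows a genuinely different route from the paper's own proof. The paper settles the lemma in two lines: it quotes the preceding Proposition ($1,2\in\U$ and $\Or(2)=\{2\}$), notes $\Db\setminus\{1,2\}\subseteq B_2$, and then cites \cite[Lemma 45]{BS} to produce the number $\lfloor a/2\rfloor$; no arithmetic modulo $r_1$ is carried out at all. You instead prove everything from first principles: the restricted-representative property converts $|\U|$ into the number of $q$-cyclotomic cosets modulo $r_1$ meeting $D_1$, and that number is computed by the necklace argument on length-$a$ base-$q$ digit strings. The paper's route buys brevity and reuse of the binary computations of \cite{BS}; yours buys self-containedness, a transparent localization of where $q>2$ enters (the digit $2$ exists and $C_{r_1}(2)\neq C_{r_1}(1)$, i.e.\ Lemma \ref{cardinales}), and, importantly, robustness with respect to the folding phenomenon you isolate: an element $q^i+q^j\in B_2$ with $i\equiv j\pmod a$ reduces to $2q^{i\bmod a}$ modulo $r_1=q^a-1$, hence lands in the coset $C_{r_1}(2)$. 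Your count harmlessly absorbs such elements into the already-counted coset $C_{r_1}(2)$, whereas the paper's chain leans on the Proposition's claim $\Or(2)=\{2\}$, whose proof asserts that ${\rm wt}_q(e)=2$ forces $e\notin C_{r_1}(2)$ --- an assertion that fails precisely for folded elements. For instance, take $q=3$, $m=6$, $r_1=8$, $a=2$ (a row of Table \ref{Tabla2order}): then $10=3^0+3^2\equiv 2\pmod 8$, so if $2\in\Db$ then restrictedness forces the representative of $C_n(10)$ in $\Db$ to be congruent to $2$ modulo $8$, whence $\Or(2)\neq\{2\}$. Since your argument never uses $\Or(2)$ at all, it is immune to this; in that respect it is not just an alternative but a sounder derivation of the same (true) count.

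One point needs tightening. Your derivation of the first equality invokes only $1,2\in\Db$, yet concludes $|\U|=2+|\U\cap B_2|$, which additionally requires $1,2\in\U$ (you use $1,2\in\U$ explicitly only in your last paragraph). This is not cosmetic: in the example above, if $\U$ represents the class of $2$ by the folded element $10\in B_2$, then $|\U\cap B_2|=|\U|-1$ and the middle quantity of the lemma changes, so the statement's ``for any election of $\U$'' cannot be taken literally. The fix is simply to declare at the outset that $\U$ is chosen as in item 2) of the preceding Proposition --- exactly the convention the paper's proof adopts when it begins with ``we have that $1,2\in\U$''.
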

\begin{proof}
From the previous results we have that $1,2\in\U$ and $\Or(2)=\{2\}$. On the other hand, note that 
$$\Db\setminus\{1,2\}\subseteq B_2.$$
Then, by \cite[Lemma 45]{BS}, we have that $\left|\left\{e\in\Db\setminus\{1,2\}\mid e\notin\Or(1)\right\}\right|=\left\lfloor a/2\right\rfloor$. This finishes the proof.
\end{proof}

\begin{theorem}\label{main2order}
Let $\C^*=R^*_q(m(q-1)-3,m)$ and suppose that the conditions given in (\ref{restrictions}) hold. Let $\D^*\subseteq \Z_n$ be the defining set of $\C^*$, with respect to $\alpha$, a primitive $n$-th root of unity, and let $\C\leq \A(r_1,r_2)$ be the abelian code with defining set $D(\C)=T(\D^*)$. We define the sets $\gamma_i\subseteq\Z_{r_1}\times\Z_{r_2}$, for $i=1,2,3$, given by

$\begin{array}{rl}
\gamma_1=&\displaystyle\left\{(i_1,i_2)\mid 0\leq i_1<\frac{a(a-1)}{2}\ \text{ and }\ 0\leq i_2<b^2\right\},\\
&\\
\gamma_2=&\left\{(i_1,i_2)\mid \displaystyle\frac{a(a-1)}{2}\leq i_1<\frac{a(a+1)}{2}\ \text{ and }\right.\\
   & \displaystyle \left.0\leq i_2<\frac{b(b+1)}{2}\right\},\\
	&\\
\gamma_3=&\displaystyle\left\{(i_1,i_2)\mid \frac{a(a+1)}{2}\leq i_1<\frac{a(a+3)}{2}\ \text{ and }\right.\\
 &\displaystyle \left.0\leq i_2<b\right\}.\\
&
\end{array}$

Then, the set $T^{-1}\left(\Gamma\right)$ where 
$$\Gamma=\gamma_1\cup\gamma_2\cup\gamma_3$$
 is a set of check positions for $R^*_q(m(q-1)-3,m)$. 
\end{theorem}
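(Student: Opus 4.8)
\textbf{The plan is to} apply Theorem~\ref{infosetabelian} directly, feeding it the structural data about $D^*$ that the preceding lemmas and proposition have assembled. The entire result reduces to computing the quantities $M(u)$, $f_j$, and $g_k$ that appear in that theorem, for the specific set $\U = \{1, 2\} \cup (\U \cap B_2)$, and then reading off the resulting set of check positions $\Gamma(\C)$. The three pieces $\gamma_1, \gamma_2, \gamma_3$ should correspond to the three distinct values that $M(u)$ takes on $\U$, sorted in decreasing order; the $i_2$-bounds $b^2$, $\frac{b(b+1)}{2}$, $b$ will be the successive $f_j$ values, and the $i_1$-bounds $\frac{a(a-1)}{2}$, $\frac{a(a+1)}{2}$, $\frac{a(a+3)}{2}$ will be the cumulative sums $g_j$.

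\textbf{First I would} handle the two easy representatives. For $u = 2$, Proposition above gives $\Or(2) = \{2\}$, so $M(2) = \frac{|C_n(2)|}{|C_{r_1}(2)|}$; by Lemma~\ref{cardinales}, $|C_{r_1}(2)| = a$, and since $\mathrm{wt}_q(2) = 1 < m$ one has $|C_n(2)| = m = ab$, giving $M(2) = b$. For $u = 1$, the orbit $\Or(1)$ collects all $e \in \Db$ with $e \equiv 1 \pmod{r_1}$; this includes $1$ itself together with those elements of $B_2$ that fall into $C_{r_1}(1)$, and one computes $M(1) = \frac{1}{a}\sum_{v \in \Or(1)} |C_n(v)|$. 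The remaining representatives lie in $\U \cap B_2$ and, by Lemma~\ref{cardinaldeU}, there are exactly $\lfloor a/2 \rfloor$ of them; for each such $u$ one has $|C_{r_1}(u)| = a$ (again by Lemma~\ref{cardinales}, since these are weight-$2$ elements whose $r_1$-reduction is not in $C_{r_1}(1)$), and $M(u)$ counts the combined $C_n$-cardinalities over $\Or(u)$.

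\textbf{The main obstacle} will be the bookkeeping that shows these $M(u)$ values collapse into exactly three levels with the stated multiplicities, and then that the cumulative sums $g_k = \sum_{M(u) \ge f_k} |C_{r_1}(u)|$ produce precisely the arithmetic-progression bounds $\frac{a(a-1)}{2}$, $\frac{a(a+1)}{2}$, $\frac{a(a+3)}{2}$. This is where the hypothesis $r_1 = q^a - 1$ does the real work: it forces every relevant cyclotomic coset and $q$-orbit to have the uniform size $a$, so that counting representatives translates cleanly into counting positions. I would identify which elements of $B_2 = \{q^i + q^j \mid 0 \le i < j < m\}$ get absorbed into $\Or(1)$ versus remaining as their own classes, mirroring the analysis of \cite[Lemma~45]{BS}; the count $\lfloor a/2 \rfloor$ from Lemma~\ref{cardinaldeU} is the key combinatorial input here. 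The three $M$-levels and their frequencies then determine the $g_k$ via summation, and the $f_j$ determine the $i_2$-ranges directly.

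\textbf{Finally I would} invoke the remark that the value of $q$ does not enter these cardinality computations (as noted in the proof of Theorem~\ref{infosetabelian}), so that the parallel calculations from \cite{BS} and \cite{BS2} carry over verbatim, and assemble $\Gamma = \gamma_1 \cup \gamma_2 \cup \gamma_3$ as the direct image of the set-of-check-positions description in Theorem~\ref{infosetabelian}. The treatment of $u = 2$ as a separate singleton orbit, flagged after the Proposition as ``a crucial fact,'' is what lets one reuse the existing binary computations for the $\Omega(1) \cup B_2$ part while accounting for the extra coset $C_n(2)$ separately; I expect this separation to be the conceptual hinge of the argument rather than any single hard estimate.
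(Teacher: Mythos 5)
Your proposal follows essentially the same route as the paper's proof: apply Theorem~\ref{infosetabelian} to $\C$, compute $M(2)=b$ by using $\Or(2)=\{2\}$ together with Lemma~\ref{cardinales}, import from \cite{BS} the values $M(1)=\frac{b(b+1)}{2}$ and $M(e)=b^2$ for $e\in\U\setminus\{1,2\}$ (noting these computations are independent of $q$), order them as $f_1=b^2>f_2=\frac{b(b+1)}{2}>f_3=b$, and accumulate the sums $g_1<g_2<g_3$ to read off $\gamma_1,\gamma_2,\gamma_3$. Your identification of the separate treatment of $u=2$ as the conceptual hinge is also exactly the paper's point.

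There is, however, one concrete flaw in your bookkeeping step. The claim that $r_1=q^a-1$ ``forces every relevant cyclotomic coset and $q$-orbit to have the uniform size $a$'', and in particular that $|C_{r_1}(u)|=a$ for every $u\in\U\cap B_2$, is false when $a$ is even: for $u\equiv 1+q^{a/2} \pmod{r_1}$ one has $q^{a/2}\cdot(1+q^{a/2})=q^{a/2}+q^{a}\equiv 1+q^{a/2}\pmod{q^a-1}$, so $|C_{r_1}(u)|=a/2$. Your own numbers expose the inconsistency: Lemma~\ref{cardinaldeU} gives $\lfloor a/2\rfloor$ classes in $\U\cap B_2$, so uniform size $a$ would yield $g_1=a\lfloor a/2\rfloor=a^2/2$ for even $a$, not the required $\frac{a(a-1)}{2}$; the two agree only for odd $a$. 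The paper avoids this by deferring the sum $\sum_{u\in\U\setminus\{1,2\}}|C_{r_1}(u)|=\frac{a(a-1)}{2}$ to the computations of \cite{BS}, where the exceptional half-size coset is accounted for. (Also minor: for $q>2$ one has ${\rm wt}_q(2)=2$, not $1$; the correct reason that $|C_n(2)|=m$ is Lemma~\ref{cardinales}(2) applied with $r=n=q^m-1$, which is how the paper argues.) Lemma~\ref{cardinales} as stated covers only $C_{r_1}(1)$ and $C_{r_1}(2)$, so it cannot be cited for the weight-two elements of $B_2$ as you do.
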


\begin{proof}
Let $\Db$ be a suitable set of representatives of the $q$-cyclotomic cosets modulo $n$ in $\D^*$ and let $\U$ be as in Definition \ref{defU}, such that $1,2\in\U$. We want to apply Theorem \ref{infosetabelian} to the code $\C$. To do it we need to compute the value $M(u)$ for any $u\in \U$. 

In \cite{BS} we obtain those values for any $u\in\U\setminus\{2\}$, to wit
$$M(1)=\frac{b(b+1)}{2}$$
and 
$$M(e)=b^2, \text{ for any } e\in\U\setminus\{1,2\}.$$
The reader may see that these computations are also valid for $q\neq 2$.

Now, let us compute $M(2)$. By definition
$$M(2)=\frac{1}{|C_{r_1}(2)|}\sum\limits_{v\in\Or(2)}|C_n(v)|=\frac{|C_n(2)|}{|C_{r_1}(2)|}=\frac{m}{a}=b$$
where we have used that $|C_n(2)|=|C_n(1)|$ and $|C_{r_1}(2)|=|C_{r_1}(1)|$ (see Lemma \ref{cardinales}). So, we have the sequence
$$f_1=b^2>f_2=\frac{b(b+1)}{2}>f_3=b$$
Observe that we have the restricted inequalities because $b>1$ (as $r_2>1$). From this, following the notation in Theorem \ref{infosetabelian} we have the other sequence
$$g_1=\sum_{\doble{u\in\U}{M(u)\geq f_1}}|C_{r_1}(u)|=\sum\limits_{u\in\U\setminus\{1,2\}}|C_{r_1}(u)|=\frac{a(a-1)}{2}$$
$$g_2=\sum_{\doble{u\in\U}{M(u)\geq f_2}}|C_{r_1}(u)|=\frac{a(a-1)}{2}+|C_{r_1}(1)|=\frac{a(a+1)}{2}$$
$$g_3=\sum_{\doble{u\in\U}{M(u)\geq f_3}}|C_{r_1}(u)|=\frac{a(a+1)}{2}+|C_{r_1}(2)|=\frac{a(a+3)}{2}$$
with
$$g_1<g_2<g_3.$$

Finally, a direct application of Theorem \ref{infosetabelian} gives us what we wanted.
\end{proof}

As in the previous section for first-order GRM codes we conclude with the following corollary.

\begin{corollary}\label{corolario2order}
Given $\Gamma$, the set of check positions described in the previous theorem, the set $\{0,\alpha^i\mid i\in T^{-1}\left(\Gamma\right)\}\subseteq G$ is an information set for $R_q(2,m)$ and $\{\alpha^i\mid i\notin T^{-1}\left(\Gamma\right)\}$ is an information set for $R_q(m(q-1)-3,m)$.
\end{corollary}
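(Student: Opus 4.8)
The plan is to obtain both assertions from Theorem~\ref{main2order}, Remark~\ref{check positions union 0}, and the standard duality exchanging information sets with the complements of check-position sets, together with the dual relationship $R_q(m(q-1)-3,m)=R_q(2,m)^{\perp}$ recorded in the remarks following Definition~\ref{defRM}. No new computation on defining sets is needed; the work is entirely bookkeeping between the length-$n$ cyclic picture, where Theorem~\ref{main2order} lives, and the length-$q^{m}$ affine-invariant picture, where $R_q(2,m)$ lives.

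First I would treat the dual code $R_q(m(q-1)-3,m)$. Theorem~\ref{main2order} asserts that $T^{-1}(\Gamma)$ is a set of check positions for the punctured cyclic code $\C^{*}=R^{*}_q(m(q-1)-3,m)$, so its complement among the nonzero coordinates, namely $\{\alpha^{i}\mid i\notin T^{-1}(\Gamma)\}$, is an information set for $\C^{*}$. Since $\C^{*}$ is the puncturing of $R_q(m(q-1)-3,m)$ at the coordinate $X^{0}$ and the two codes have the same dimension, Remark~\ref{check positions union 0} (an information set for $\C^{*}$ is an information set for $\C$ too) promotes this to an information set for $R_q(m(q-1)-3,m)$ itself, which is the second claim.

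For $R_q(2,m)$ I would pass to the dual. Within the full coordinate set $\{0,\alpha^{0},\dots,\alpha^{n-1}\}$ of $\F G$, the complement of the information set $\{\alpha^{i}\mid i\notin T^{-1}(\Gamma)\}$ is exactly $\{0,\alpha^{i}\mid i\in T^{-1}(\Gamma)\}$; by the last sentence of Remark~\ref{check positions union 0} this is precisely the set of check positions for $R_q(m(q-1)-3,m)$ obtained by adjoining $0$ to those of $\C^{*}$. Applying the standard fact that the complement, taken over all coordinates, of an information set of a code is an information set of its dual, together with $R_q(m(q-1)-3,m)^{\perp}=R_q(2,m)$, I conclude that $\{0,\alpha^{i}\mid i\in T^{-1}(\Gamma)\}$ is an information set for $R_q(2,m)$.

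The only point requiring care is the role of the zero coordinate: the complement used for duality must be taken over all $q^{m}$ positions of $\F G$ rather than over the $n$ nonzero positions alone, so that $X^{0}$ is placed in the information set of the dual. The cardinality identity $|T^{-1}(\Gamma)|+1=q^{m}-\dim R_q(m(q-1)-3,m)=\dim R_q(2,m)$ confirms that the sizes are consistent, and once the ambient coordinate set is tracked correctly both statements follow with no further argument.
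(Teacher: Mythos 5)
Your proposal is correct and follows essentially the same route as the paper, which proves the corollary (implicitly, in parallel with Corollary~\ref{corolario1order}) by combining Theorem~\ref{main2order} with Remark~\ref{check positions union 0} and the duality $R_q(2,m)^{\perp}=R_q(m(q-1)-3,m)$: the complement of $T^{-1}(\Gamma)$ among the nonzero positions gives the information set for the dual code, while adjoining $0$ to $T^{-1}(\Gamma)$ gives the check positions of the dual, hence an information set for $R_q(2,m)$. Your explicit care with the zero coordinate and the dimension count simply spells out what the paper leaves tacit.
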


\section{Examples}\label{examples}

\subsection{Examples for first-order GRM codes}
First, we include in Table \ref{Tabla1order} examples of suitable decompositions of $n=q^m-1$, with $3\leq m\leq 10$, and $q=3,5$.

\begin{table}[ht]
\caption{Suitable decompositions for $n=q^m-1$ in first-order GRM codes}
\label{Tabla1order}
\centering

\begin{tabular}{|c|c|c|c||c|c|c|c|}
 \hline &&&&&&&\\
 $q$ & $m$ & $r_1$ & $r_2$ & $q$ & $m$ & $r_1$ & $r_2$\\
&&&&&&&\\ \hline
3&3&13&2& 5&3&31&4\\ \hline
..&4&5&16& ..&4&13&48\\ \hline
..&5&121&2& ..&5&11&284\\ \hline
..&6&7&104& ..&6&7&2232\\ \hline
..&7&1093&2& ..&7&19531&4\\ \hline
..&8&41&160& ..&8&313&1284\\ \hline
..&9&757&26& ..&9&19&102796\\ \hline
..&10&61&968& ..&10&33&295928\\ \hline
\end{tabular}
\end{table}

In the following example we show how to apply the construction of information sets for first-order GRM. 

\begin{example}
 Let $q=3, m=3$ so we look for an information set for $R_3(1,3)$. As we have seen in Table \ref{Tabla1order} a suitable decomposition of $n=3^3-1=26$ is $r_1=13, r_2=2$. Moreover, it is straightforward to see that $a=Ord_{13}(3)=3$. 

Now, we fix $\alpha$ a primitive $26$-th root of unity and the isomorphism $T:\Z_{26}\rightarrow \Z_{13}\times\Z_{2}$ given by $T(1)=(1,1)$. Then, by Theorem \ref{teorema1order} the set
$T^{-1}(\Gamma)$ where
$$\Gamma=\{(i_1,i_2)\in\Z_{13}\times\Z_{2}\mid 0\leq i_1<3, i_2=0\}$$
is a set of check positions for $R^*_3(4,3)$. It is easy to see that $T^{-1}(\Gamma)=\{0,2,14\}$. Moreover, by Corollary \ref{corolario1order} the set $\{0,1,\alpha^2,\alpha^{14}\}$ is an information set for $R(1,3)$. 

\end{example}

\begin{remark}\label{obs1order}
The reader may check that a different election of $\alpha$ is equivalent to a different election of $T$. Furthermore, different elections of $T$ may yield different information sets. So, with this construction we get at most as many information sets as there are isomorphisms.
\end{remark}

\subsection{Second-order GRM codes}

First, let us show that conditions (\ref{restrictions}) are not too restrictive. As examples, we take $q=3, 5$ and in Table \ref{Tabla2order} we show that we can find suitable values easily for $m\leq 10$. Recall that we need the condition $r_1=q^a-1$, for some natural number $a$, so we have included the column with the values $a$ instead the corresponding to $r_2$; the reader could obtain $r_2$ as $n/r_1$.

\begin{table}[ht]
\caption{Suitable decompositions for $n=q^m-1$ in second-order GRM codes}
\label{Tabla2order}
\centering

\begin{tabular}{|c|c|c|c||c|c|c|c|}
 \hline &&&&&&&\\
 $q$ & $m$ & $r_1$ & $a$ & $q$ & $m$ & $r_1$ & $a$\\
&&&&&&&\\ \hline
3&3&2&1& 5&3&4&1\\ \hline
..&5&2&1& ..&5&4&1\\ \hline
..&6&8&2& ..&7&4&1\\ \hline
..&7&2&1& ..&9&4&1\\ \hline
..&9&2&1& ..&9&124&3\\ \hline
..&9&26&3& ..&10&24&2\\ \hline
..&10&8&2& &&&\\ \hline
\end{tabular}
\end{table}

Now, let us show how the construction works. 

\begin{example}
 Let $q=5, m=3$ so we look for an information set for $R_5(2,3)$. As we have seen in Table \ref{Tabla2order} a suitable decomposition of $n=5^3-1=124$ is $r_1=4, r_2=31$, where $a=1$. Now, we fix $\alpha$ a primitive $124$-th root of unity and the isomorphism $T:\Z_{124}\rightarrow \Z_{4}\times\Z_{31}$ given by $T(1)=(1,1)$.

Then, we have the following sets, which were mentioned in Theorem \ref{main2order}:
$$\gamma_1=\emptyset$$
$$\gamma_2=\{(i_1,i_2)\in\Z_{4}\times\Z_{31}\mid i_1=0, 0\leq i_2<6\}$$
$$\gamma_3=\{(i_1,i_2)\in\Z_{4}\times\Z_{31}\mid i_1=1, 0\leq i_2<3\}$$

Then, by Theorem \ref{main2order} the set
$T^{-1}(\Gamma)$ where
$$\Gamma=\gamma_2\cup\gamma_3$$
is a set of check positions for $R^*_5(9,3)$. Moreover, by Corollary \ref{corolario2order} the set $\{0,\alpha^i\mid i\in T^{-1}\left(\Gamma\right)\}$ is an information set for $R(2,3)$.
\end{example}

Let us note that the same comment that we made in Remark \ref{obs1order} is valid in this case.

\section{Conclusions and related open problems}\label{conclusions}
In this work we have dealt with the problem of constructing information sets for GRM codes of first and second order just in term of their basic parameters. We have obtained that while for first-order GRM codes we get the same information sets that was presented in \cite{BS} for (binary) Reed-Muller codes, for second-order GRM codes we get different ones with a more complex structure.

In relation with this results some open problems arise:
\begin{itemize}
	\item It seems to be possible to study the approach to the aim of this paper seeing the punctured cyclic codes as multidimensional cyclic codes, that is, in the algebra $\A(r_1,\dots,r_l)$ with $l>2$. Meanwhile, it is clear that this drive us to a considerably increase of the complexity of the computations on the cyclotomic cosets.
	\item In \cite{BS3} we show how to apply the permutation decoding algorithm to first-order GRM codes and we obtain very relevant improvements in the number of errors to be corrected. So, it becomes a very interesting problem how we can apply that decoding algorithm to second-order GRM codes with respect to the new information sets described in the present work.
\end{itemize}

\section*{Acknowledgments}
This work has been supported by the Spanish Government under Grant PID2020-113206GBI00
funded by MCIN/AEI/10.13039/501100011033 and Fundación Séneca of Murcia, Project PI/22.

\end{document}